\newcommand\alg[1]{{\sc#1}}
\newcommand\tool[1]{{\sc #1}}
\renewcommand\rule[1]{\scalebox{.8}{({\sc \lowercase{#1}})}}
\newcommand\premise[1]{\scalebox{.8}{{\sc \lowercase{#1}}}}
\newcommand\Long[1]{#1}
\newcommand\Short[1]{}
\newcommand\fo{\mathrm{f1}}
\newcommand\ft{\mathrm{f2}}
\newcommand\tone{\mathrm{t1}}
\newcommand\ttwo{\mathrm{t2}}
\newcommand\fosu{\fo^{su}}
\newcommand\ftsu{\ft^{su}}
\title{Regression verification of unbalanced recursive functions with multiple calls \Long{(long version)}}
\author{
  Chaked R.J.~Sayedoff \inst{1}
    \and Ofer Strichman \inst{1}
}
\institute{IE, Technion, Haifa, Israel. \\ 
\email{chaked@campus.technion.ac.il} \hspace{1 cm}	\email{ofers@technion.ac.il}
}
\begin{document}

\maketitle
\begin{abstract}
Given two programs $p_1$ and $p_2$, typically two versions of the same program, the goal of \emph{regression verification} is to mark pairs of functions from $p_1$ and $p_2$ that are equivalent, given a definition of equivalence. The most common definition is that of \emph{partial equivalence}, namely that the two functions emit the same output if they are fed with the same input and they both terminate. 
The strategy used by the Regression Verification Tool (RVT) is to progress bottom up on the call graphs of $P_1,P_2$, abstract those functions that were already proven to be equivalent with uninterpreted functions, turn loops into recursion, and abstract the recursive calls also with uninterpreted functions. This enables it to create verification conditions in the form of small programs that are loop- and recursion-free. This method works well for recursive functions as long as they are in sync, and typically fails otherwise. In this work we study the problem of proving equivalence when the two recursive functions are not in sync. Effectively we extend  previous work that studied this problem for functions with a single recursive call site, to the general case. We also introduce a method for detecting automatically the unrolling that is necessary for making two recursive functions synchronize, when possible. We show examples of pairs of functions with multiple recursive calls that can now be proven equivalent with our method, but cannot be proven equivalent with any other automated verification system.
\end{abstract}

\section{Introduction: Regression Verification}
It is very common that changes in a program introduce new unwanted behaviors (`bugs'). Informal methods for checking the new code include code review, testing the new program, and \emph{regression testing}~\cite{DBLP:conf/esec/Zeller99,D02}, a method by which the output of the new and old versions of the program are compared on a given set of test cases. Formal methods include program verification, which attempts to formally verify the correctness of the new program against user-specified assertions in the code, and formal program equivalence checking. Both problems are undecidable in general. While the general problem of proving equivalence of programs was studied for decades, mostly in the theorem proving world, e.g.,~\cite{C02,MK02,MV06}, the specialization of this problem to closely-related programs, coined by  
Godlin and Strichman as \emph{regression verification}~\cite{GS06,GS07,DBLP:conf/dac/GodlinS09}, stirred the development of various tools and approaches, such as \tool{RVT}~\cite{DBLP:conf/dac/GodlinS09}, \tool{Reve}~\cite{DBLP:conf/kbse/FelsingGKRU14}, \tool{SymDiff}~\cite{DBLP:conf/cav/LahiriHKR12}, \tool{ARDiff}~\cite{10.1145/3368089.3409757} and others. Regression verification suggests verifying the program against an older version of itself rather than verifying it against a user-supplied specification or assertions. That is, generate an automated proof of equivalence between a program and its previous version. Although this problem is still undecidable, for closely related programs regression verification is believed to be easier in practice than program verification, and also spares the need to manually specify assertions. Furthermore, it can exploit the similarity between the programs to enable the proof and to reduce the computational complexity~\cite{DBLP:conf/dac/GodlinS09}. Admittedly the notion of correctness is weaker compared to functional verification, since the previous version of the code may have suffered from the same problems as the new one, but on the other hand it might be applicable in large scale programs, where formal functional verification is impractical, and where there is no specification or the assertions cover only a small part of the program's behavior. Regression verification is relevant especially when the code has changed but is supposed to behave the same towards its interface, e.g., as a result of \emph{refactoring} and \emph{performance tuning}. There is more than one definition to what equivalence between functions means (see six definitions in~\cite{GS08}), but here we will use the most commonly used, namely that of \emph{partial equivalence}~\cite{DBLP:conf/dac/GodlinS09}: $f_1$ and $f_2$ are partially equivalent if given the same input, and both terminate, they return the same output. 
	
The problem of program equivalence can be easily reduced to that of software verification, by generating a new program, whose template is shown in Fig.~\ref{fig:rvtmainprogram} for a case in which the compared main functions are $\fo$ and $\ft$ (these functions can generally call other functions). This program executes the given programs with the same non-deterministic input and then asserts that their output is equal (renaming of functions may be required to avoid collisions, and global variables need to be added to the parameters lists). Hence, 
any software verification tool as well as interactive theorem provers can be used to prove equivalence between programs.  However, this route does not take any advantage of the assumed similarity between the compared programs. 
	
	\begin{figure}
	\begin{center}
%		\begin{minipage}{7 cm}			
\begin{lstlisting}
<code of f1(i) and f2(i)>;

main() {
  i = non_det();
  assert(f1(i) == f2(i));
}
\end{lstlisting}
%		\end{minipage}
		\caption{A template for proving the equivalence of two functions, $\fo$ and $\ft$, by a reduction to a single-program verification task. Assume that the input {\tt i} is sent by value. }
		\label{fig:rvtmainprogram}
	\end{center}
\end{figure}

Let us mention here a selected set of systems that specialize in regression verification. \tool{RVT} (Regression Verification Tool) will be described in detail in the next section, as its strategy for proving the equivalence of recursive functions is the basis of the current work. \tool{Reve} (stands for REgression VErification) \cite{DBLP:conf/kbse/FelsingGKRU14} uses \emph{weakest liberal precondition} ($wlp$) calculus to reduce the equivalence problem into a Horn constraint and solve it using Z3 \cite{DBLP:conf/sat/HoderB12} or Eldarica \cite{DBLP:conf/cav/RummerHK13}, to try to prove the equivalence or generate a counterexample.
\tool{Symdiff} by Microsoft \cite{DBLP:conf/cav/LahiriHKR12} translates the given programs into an intermediate language called Boogie, which has its own formal verifier based on \tool{Z3}.
\tool{ARDiff} by Badihi at al.~\cite{10.1145/3368089.3409757} is, to the best of our knowledge, the latest implementation of a regression verification tool and is based on symbolic execution. It also uses an abstraction/refinement loop based on CEGAR~\cite{10.1007/10722167_15} to remove parts of the code that are not relevant for the equivalence proof.
Noteworthy previous tools in this area are \tool{CLEVER}~\cite{9285657}, \tool{MDDiff}~\cite{10.1007/978-3-319-66706-5_20}, \tool{IMP-S}~\cite{inproceedings} and \tool{DSE}~\cite{10.1145/1453101.1453131}. 

The topic of this article is a new proof rule and a corresponding (incomplete) verification method for proving the equivalence of unbalanced recursive functions, when there are multiple such calls in the top frame of the functions. The simpler case, of a single recursive call, was dealt with in the past in~\cite{DBLP:conf/fm/StrichmanV16}. We also show a method for finding the proper unrolling of the two compared functions so they will synchronize. 
Our implementation in \tool{RVT} of this rule together with the automation of the unrolling, is capable of proving equivalence of functions that no other regression verification tool or software verification tool can prove. 

The rest of the article is structured as follows. In the next section we explain RVT's strategy of proving the equivalence of recursive functions. In Sec.~\ref{sec:equiv} we show why proving the equivalence of recursive functions that are not synchronized, fail the proof rule that is used by RVT. Our contribution begins in Sec.~\ref{sec:equiv}, where we show how to prove equivalence of unbalanced recursive functions when there is a single path to the function call site, given a correct unrolling of the functions that make them synchronize. In Sec.~\ref{sec:synch} we show how to automate the process of finding that correct unrolling. In Sec.~\ref{sec:multistep} we generalize the results of Sec.~\ref{sec:equiv} to general functions, i.e., with multiple paths, by leveraging `concolic execution' to get all the paths of the function in its top frame, to recursive call sites. We conclude that section by describing cases that demonstrate the incompleteness of our new proof rule.

\section{An overview of RVT's Strategy for Proving Equivalence}
\label{sec:rvtreview}
%RVT is a Regression Verification Tool, originally developed by Godlin and Strichman . 
Given a mapping between the functions of two programs, RVT~\cite{DBLP:conf/dac/GodlinS09,DBLP:journals/stvr/GodlinS13} tries to prove the partial equivalence of each of the paired functions. It has a decomposition algorithm, which traverses bottom-up the call graphs of the two programs, and generates each time a verification condition for proving the equivalence of a pair of functions. If that verification condition is proven correct, then RVT marks those functions as equivalent, and from thereon any call to those functions is replaced with a call to the same uninterpreted function. This algorithm becomes more complicated in the presence of mutually recursive functions, but we will not go into more details here. 
The full decomposition algorithm is described as Alg 1 in~\cite{DBLP:journals/stvr/GodlinS13}. 

In the rest of this section we explain how RVT attempts to prove the equivalence of two recursive \emph{functions}, rather than full programs. Furthermore, we will assume from hereon that those functions accept a single parameter by value, that there are no global variables, and that if those functions call other functions, then those were already proven equivalent and replaced with uninterpreted functions.   

RVT's strategy for proving the equivalence of recursive functions is inspired by Hoare's inference rule for proving a given pre-post condition over a single recursive function, that was published already in 1971~\cite{DBLP:series/lnm/Hoare71}: 

\begin{equation} \label{eqn:parteq}
	{\frac {\{p\} call\ proc \{q\} \vdash_H \{p\} proc\text{-}body \{q\}}{\{p\} call\ proc \{q\}}} 
\end{equation}
Here $p$ and $q$ are the pre- and post-conditions. The rule says that if assuming that the $p,q$ relation holds for calls to procedure $proc$ enables us to prove that this relation holds for the procedure's body (\emph{proc-body}), then it also holds for calls to $proc$. Hence, the correctness is based on an inductive argument: we use what we want to infer as a hypothesis on the recursive call. 

\cite{DBLP:conf/dac/GodlinS09} introduced a new inference rule for proving the equivalence of two recursive functions, that is based on the same principle. Here, however, the pre- and post-conditions are fixed: the pre-condition is the equivalence of the inputs, and the post-condition is the equivalence of the outputs. More specifically, for recursive functions $\fo$ and $\ft$ the rule is
\begin{equation}\label{eq:rvtrule}
	{\frac {
			\begin{array}{c}
			{\emph{in}[call\ \fo]=\emph{in}[call\ \ft]\rightarrow \emph{out}[call\ \fo]=\emph{out}[call\ \ft]}\\
			 \vdash{\emph{in}[\fo\ body]=\emph{in}[\ft\ body]\rightarrow \emph{out}[\fo\ body]=\emph{out}[\ft\ body]}
	\end{array}	 
	  }
		{{\emph{in}[call\ \fo]=\emph{in}[call\ \ft]\rightarrow \emph{out}[call\ \fo]=\emph{out}[call\ \ft]}}} 
		\rule{PART-EQ}
\end{equation}
This rule states that if assuming that the calls to $\fo$ and $\ft$ are partially-equivalent (i.e., same inputs result in the same outputs) enables us to prove that $\fo$ and $\ft$ bodies are partially-equivalent, then $\fo$ and $\ft$ are partially equivalent. RVT checks the premise of (\ref{eq:rvtrule}) by replacing the recursive calls in $\fo,\ft$ with calls to the same uninterrupted function ($UFs$). An uninterpreted function, by definition, is congruent and over-approximates the substituted function. Hence, this substitution fulfills the first part of the premise. We define \emph{related calls} as:
\begin{definition}[Related Calls]
	\label{def:relatedcalls}
	Recursive calls to $\fo$ and $\ft$ are called \emph{related} if they are made with the same parameters.  
\end{definition}
The uninterpreted functions that replace the recursive calls emit the same (nondeterministic) value on related calls, which is many time sufficient for proving equivalence. 

\begin{example} \label{ex:main}
Consider the two functions in  Fig.~\ref{fig:sump}. 
The verification condition that RVT generates uses the template of Fig.~\ref{fig:rvtmainprogram}, where the two functions in Fig.~\ref{fig:sump} are replaced with those that appear in Fig.~\ref{fig:sump:vc}. A `pass' result on this condition implies that the  premise of (\ref{eq:rvtrule}) is true. 
\begin{figure}
\begin{center}
\begin{tabular}{ll}
\begin{minipage}{6 cm}
\begin{lstlisting}
int sum1(int n){
   if (n <= 1) return n;
   return n + sum1(n-1);
}
\end{lstlisting}
\end{minipage} & 
\begin{minipage}{6 cm}
\begin{lstlisting}
int sum2(int n){
   int res;
   if (n <= 1) return n;
   res = sum2(n-1);
   return res + n;
}
\end{lstlisting}
\end{minipage}
\end{tabular}
\caption{Two equivalent recursive functions.}\label{fig:sump}
\label{fig:sum}
\end{center}
\end{figure}

\begin{figure}
\begin{center}
\begin{tabular}{ll}
\begin{minipage}{6 cm}
\begin{lstlisting}
int sum1(int n){
   if (n <= 1) return n;
   return n + UF(n-1);
}
\end{lstlisting}
\end{minipage} & 
\begin{minipage}{6 cm}
\begin{lstlisting}
int sum2(int n){
   int res;
   if (n <= 1) return n;
   res = UF(n-1);
   return res + n;
}
\end{lstlisting}
\end{minipage}
\end{tabular}
\caption{The functions of  Fig.~\ref{fig:sump}, 
after replacing the recursive calls with the \emph{same} uninterpreted function `UF'}.
\label{fig:sump:vc}
\end{center}
\end{figure}
\end{example}

As a preprocessing step, RVT replaces all loops with calls to new recursive functions. Hence after replacing all the recursive calls with uninterpreted functions, the program in Fig.~\ref{fig:rvtmainprogram} is \emph{flat}, and hence its verification is decidable. RVT calls CBMC~\cite{ckl2004}, a software bounded model checker, to verify this program.

\subsection{Proving equivalence of functions with Unbalanced Recursion }
\label{sec:equiv}

Example ~\ref{ex:main} listed functions that are in \emph{sync}, which we formally define as follows.
\begin{definition}[Sync] \label{def:sync}
	Let $\fo$ and $\ft$ be two recursive functions. $\fo$ and $\ft$ are said to be in \emph{sync} if for every input they are called recursively at the top frame with the same sets of parameters, or not called at all. 
\end{definition}
%
%We emphasize that this definition does \emph{not} require that the number of recursive calls is the same on both sides for a given input -- it only considers the parameter \emph{sets}, e.g., it is possible that when $\fo$ is called with an input $in$, it calls itself recursively twice at the top frame with a parameter value $in'$, whereas $\ft$ calls itself only once with a parameter $in'$.
In such cases \rule{part-EQ} is typically very successful in proving equivalence. 
However, it turns out that once the functions are not in sync, it typically fails. 
In such a case the uninterpreted functions do not necessarily return the same value, because they are called with different parameters, and this typically fails the proof of equivalence. Moreover, if there exists an input for which $\fo$ invokes the base case whereas $\ft$ makes a recursive call, \rule{part-EQ} fails as well, because the recursive call is replaced with a UF which is not matched on the other side, and that UF returns a non-deterministic value. As mentioned in the introduction, a past publication~\cite{DBLP:conf/fm/StrichmanV16} attempted to deal with this problem, but only considered a simple case in which there is a single recursive call in the function, and also relied on a manual input of the unrolling that is necessary in order to synchronize the two functions. 
The rest of this article is dedicated to a proof rule and corresponding strategy that in some cases are able to solve the general case in which there are multiple calls and some of them may be conditioned, and do so automatically.

\section{Equivalence of \emph{single-path} unbalanced recursions\Long{: the rule \rule{PATH-PART-EQ}} }
\label{sec:adaptstep}

\begin{figure}
\begin{center}
\begin{tabular}{ll}
\begin{minipage}{6.5 cm}
\begin{lstlisting}
int f1(int n){
   if (n < 1) return 0;
   if (n <= 2) return 1; 
   return f1(n-1) + f1(n-2);
}
\end{lstlisting} 
\end{minipage} 
&
\begin{minipage}{7 cm}
\begin{lstlisting}
int f2(int n){
   if (n < 1) return 0;
   if (n <= 2) return 1;
   return f2(n-2) + f2(n-2) 
                  + f2(n-3) ;
}
\end{lstlisting}
\end{minipage}
\end{tabular}
\caption{Two equivalent implementations of the Fibonacci sequence.}
\label{fig:f1f2}
\end{center}
\end{figure}

In this section we will consider pairs of recursive functions $\fo,\ft$ with the following properties: 
\begin{itemize}
    \item the control flow of $\fo$ and $\ft$ contains a single path with recursive calls (in Sec.\ref{sec:multistep} 
    we will remove this requirement)
    %(in Sec.~\ref{sec:multistep} we will remove this requirement), 
    \item $\fo$ and $\ft$ are not in sync, and
    \item there exists an unrolling of $\fo,\ft$ that synchronizes them.
\end{itemize} 

Let us demonstrate a case in which two input functions $\fo$ and $\ft$, which are not in sync, can be synchronized with unrolling. 
\begin{example} \label{ex:unroll}
Consider the pair $\fo,\ft$ in Fig~\ref{fig:f1f2}. We will use them as a running example in this article. 
These functions are not in sync: for example, with input $n=4$, $\fo$ makes recursive calls with parameters $3,2$ whereas $\ft$ makes recursive calls with parameters $2,1$. Hence, e.g., $\fo$'s recursive call with the parameter 3 does not have a related call (see Def.~\ref{def:relatedcalls}) by $\ft$, which fails a proof based on the technique that was described in Sec.~\ref{sec:rvtreview}. 
However, after unrolling the call $\fo(n-1)$ once, each recursive call has a related call, as depicted in Fig.~\ref{fig:f1f2cgunrolled}.
\qed 
\end{example}

More formally, we need to find an unrolling of the two functions with the following property. 
\begin{definition}[Sync-Unrolling] \label{def:Sync-Unrolling}
	An unrolling \emph{su} of two recursive functions $\fo$ and $\ft$ is called \emph{Sync-Unrolling} if after applying it the sets of recursive calls on both sides are the same.
\end{definition}
Going back to Example~\ref{ex:unroll}, the unrolling that we applied is indeed a sync-unrolling, because both functions now have the set of parameters $\{n-2,n-3\}$. 

Since the recursive calls in the functions can be conditioned, the correct unrolling may depend on the input value, and in such cases it is not likely that there is a single unrolling which is a sync-unrolling for all inputs. This is the reason that we decompose the proof to the granularity of paths in the top frame of the functions, and the reason that in this section we only consider a single path. 

Also, clearly such an unrolling does not necessarily exist even if the two functions are equivalent. For example, if $\fo$ sums an array of numbers from the beginning to the end, and $\ft$ in the reverse order, no unrolling can synchronize these two functions.

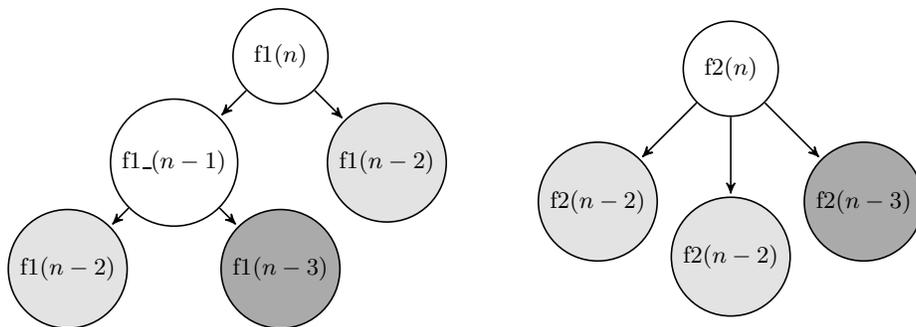
\begin{figure}
\begin{minipage}{.15\textwidth}
\begin{tikzpicture}[>=stealth',shorten >=1pt,node distance=2.0cm,on grid,semithick]
\tikzstyle{n2} = [circle,draw=black,fill={rgb:black,1;white,8}]
\tikzstyle{n3} = [circle,draw=black,fill={rgb:black,3;white,6}]
    \node[state] (f1n) {$\ \fo(n)\ $};
    \node[state] (f1n1) [below left =of f1n] {$\fo\_(n-1)$};
    \node[n2] (f1n2) [below right =of f1n] {$\fo(n-2)$};
    \node[n2] (f1n22) [below left =of f1n1] {$\fo(n-2)$};
    \node[n3] (f1n3) [below right =of f1n1] {$\fo(n-3)$};
    
    \tikzset{mystyle/.style={->}}
    \tikzset{every node/.style={fill=white}}
    \draw (f1n) edge [mystyle]  (f1n1);
    \draw (f1n) edge [mystyle]  (f1n2);
    \draw (f1n1) edge [mystyle]  (f1n22);
    \draw (f1n1) edge [mystyle]  (f1n3);
\end{tikzpicture}
\end{minipage}
\hspace{5cm}
\begin{minipage}{.15\textwidth}
\begin{tikzpicture}[>=stealth',shorten >=1pt,node distance=2.5cm,on grid,semithick ]
\tikzstyle{n2} = [circle,draw=black,fill={rgb:black,1;white,8}]
\tikzstyle{n3} = [circle,draw=black,fill={rgb:black,3;white,6}]
    \node[state] (f1n) {$\ \ft(n)\ $};
    \node[n2] (f1n2) [below  =of f1n] {$\ft(n-2)$};
    \node[n2] (f1n22) [below left =of f1n] {$\ft(n-2)$};
    \node[n3] (f1n3) [below right =of f1n] {$\ft(n-3)$};
    
    \tikzset{mystyle/.style={->}}
    \tikzset{every node/.style={fill=white}}
    \draw (f1n) edge [mystyle]  (f1n2);
    \draw (f1n) edge [mystyle]  (f1n22);
    \draw (f1n) edge [mystyle]  (f1n3);
\end{tikzpicture}
\end{minipage}
\caption{The call graphs of $\fo$ and $\ft$ after unrolling $\fo(n-1)$ once. Related calls are colored the same.}
\label{fig:f1f2cgunrolled}
\end{figure}

We observe that the `pruned' call graph corresponding to the sync-unrolling of a function has a specific structure. Let $c(f)$ denote the number of function calls in the path of $f$ that is being considered. The root of the call graph is $f$, and the number of branches that each node has is either $c(f)$ (if we choose to expand it), or 0 (if we choose not to expand it). Reconsidering Example~\ref{ex:unroll}, $\fo$ has two recursive calls ($c(\fo)=2$) in its path, and indeed each node has either 2 or 0 branches. In Sec.~\ref{sec:synch} we will describe a method for finding synchronization unrolling instructions automatically, when at all possible. 

Let $su(\fo,\ft)$ denote the \emph{sync-unrolling} instructions of $\fo$ and $\ft$, which we will abbreviate to $su$ from hereon. 
Correspondingly, $\fosu,\ftsu$ denote $\fo,\ft$ after being unrolled according to $su$. If $su$ exists and we can find it, then our proof strategy is based on splitting the input domain to two:
\begin{itemize}
\item The \emph{base-case} domain: all the inputs that reach a base case on at least one of $\fosu$ or $\ftsu$. 
\item The \emph{step-case} domain: the complement of the base-case domain. 
\end{itemize}

We then use the following proof rule:
\begin{equation}
 {\frac {\text{\premise{PATH-BASE-EQUIV}}(\fosu,\ftsu) \quad \text{\premise{PATH-STEP-EQUIV}}(\fosu,\ftsu)}{\text{\premise{PARTIAL-EQUIV}}(\fo,\ft)}} 
  \quad \text{\rule{PATH-PART-EQ}}
\end{equation}
The premises (\premise{PATH-BASE-EQUIV}) and (\premise{PATH-STEP-EQUIV}) prove equivalence for the base-case and step-case domains, respectively. 

\begin{proposition} \label{prop:pathsound}
Rule \rule{PATH-PART-EQ} is sound.
\end{proposition}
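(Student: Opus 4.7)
The plan is to reduce partial equivalence of $\fo$ and $\ft$ to partial equivalence of their unrolled counterparts $\fosu,\ftsu$, and then close an induction on the call-tree depth using the two premises. First I would argue that unrolling is purely syntactic fold/unfold (it inlines one level of a recursive call in place), so on every input $x$ for which $\fo(x)$ (resp.~$\ft(x)$) terminates, $\fosu(x)$ (resp.~$\ftsu(x)$) also terminates and returns the same value. Hence it suffices to show that $\fosu$ and $\ftsu$ are partially equivalent.

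To do so, fix an input $x$ on which both $\fosu(x)$ and $\ftsu(x)$ terminate, and induct on the sum of the heights of their call trees (which is a well-founded measure by termination). If $x$ lies in the base-case domain then by definition at least one of $\fosu,\ftsu$ returns without any recursion; premise \premise{PATH-BASE-EQUIV} then yields $\fosu(x)=\ftsu(x)$ directly, as it is a claim about concrete, recursion-free computations on that part of the input space. Otherwise $x$ lies in the step-case domain and both functions take the recursive branch; by the sync-unrolling property (Def.~\ref{def:Sync-Unrolling}) the multisets of recursive-call parameters on the two sides are identical, and each such recursive call has strictly smaller measure. The induction hypothesis therefore gives equality of every pair of related recursive outputs, and premise \premise{PATH-STEP-EQUIV} — established \emph{à la} rule \rule{PART-EQ} by replacing each related pair of recursive calls with one shared uninterpreted function — shows that once those outputs coincide, the two bodies return the same value. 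This gives $\fosu(x)=\ftsu(x)$ and closes the induction.

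The step I expect to be the main obstacle is justifying the transfer from the UF-based body-equivalence that \premise{PATH-STEP-EQUIV} actually certifies to output equivalence of the real recursive calls at each inductive step. The key point to spell out is that, since the uninterpreted function is congruent and its concrete output at the inputs of any particular related pair is unconstrained, we can instantiate it at each related pair with the common value guaranteed by the induction hypothesis; the verified body-equivalence is then sound at that instantiation and yields equality of $\fosu(x)$ and $\ftsu(x)$. Modulo this semantic transfer and the preservation-of-semantics claim for unrolling, the argument is standard well-founded induction, and partial equivalence of $\fo$ and $\ft$ follows.
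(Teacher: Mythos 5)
Your argument is sound, but it takes a noticeably different and more self-contained route than the paper. The paper's own proof is a two-line domain-coverage argument: the base-case and step-case domains partition the input space by construction (the step case is defined as the complement of the base case), so the rule is sound provided each premise is a sound equivalence test on its own domain --- and that latter fact is deferred to Proposition~\ref{prop:premisesound}, which in turn leans on the published soundness proof of \rule{part-eq} and on the soundness of symbolic execution. You instead unpack what the premises certify and carry out the underlying well-founded induction on call-tree height yourself, including the key instantiation step (choosing the interpretation of the shared uninterpreted function to agree with the common value supplied by the induction hypothesis). This effectively merges Propositions~\ref{prop:pathsound} and~\ref{prop:premisesound} into one argument and makes explicit the inductive content that the paper outsources to~\cite{GS06}; the cost is that you must also re-justify semantics preservation under unrolling, which the paper likewise handles in Proposition~\ref{prop:premisesound}. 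Two small imprecisions are worth fixing: the base-case domain requires only that \emph{at least one} of $\fosu,\ftsu$ hits a base case, so the computations certified by \premise{PATH-BASE-EQUIV} are not recursion-free on that domain (the other side may still recurse, which is exactly why the paper keeps the recursive calls and uses symbolic execution there); and Definition~\ref{def:Sync-Unrolling} guarantees equality of the \emph{sets} of recursive-call parameters, not multisets --- your congruence argument only needs sets, so state it that way.
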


\begin{proof}
The whole input domain is covered by definition of the step case as a complement of the base case. Hence if each of (\premise{PATH-BASE-EQUIV}) and (\premise{PATH-STEP-EQUIV}) is a sound test of equivalence for its designated inputs, then the rule is sound. After presenting how we test those premises below, we will prove in Proposition~\ref{prop:premisesound} that those methods are indeed sound as well.
\qed	
\end{proof}

For the step case we use the original method of \tool{RVT} as described in Sec.~\ref{sec:rvtreview}, namely we replace all recursive calls with UFs. Note that since the input functions are possibly unrolled, only the actual recursive calls are replaced, e.g., only the calls at the leaves in Fig.~\ref{fig:f1f2cgunrolled}.  

For the base-case we must use a different strategy, because the above strategy would likely fail in cases where there exists an input that drives $\fosu$ to return on a base case, whereas it drives $\ftsu$ to call an uninterpreted function. As an example, consider the input $n=3$ in Fig.~\ref{fig:f1f2}. In $\fosu$ it returns on the base cases, whereas in $\ftsu$ it returns on a recursive call turned into a call to a UF. That called UF has no related call in $\fosu$ and is hence likely to fail the proof.  

Our strategy for handling the base-case is thus different: we maintain the recursive calls in $\fosu,\ftsu$ (that is, we do not replace them with UFs), and use symbolic execution to attempt to prove the equivalence. Since we restrict the runs to the base-case domain, it is likely that both sides will converge after a few iterations, which will enable the proof. In the next subsection we describe this method in more detail. 

\begin{proposition} \label{prop:premisesound}
The tests of (\premise{PATH-BASE-EQUIV}) and (\premise{PATH-STEP-EQUIV}) are sound.
\end{proposition}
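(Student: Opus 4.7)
My plan is to handle the two premises separately, since the two tests use very different mechanisms and their soundness arguments have different shapes. Before treating either, I would prove a small lemma that applying the sync-unrolling $su$ to obtain $\fosu,\ftsu$ is semantics-preserving; this reduces to the trivial fact that inlining one level of a recursive call is a valid source-to-source transformation, so that for every input $i$, $\fosu(i)$ and $\fo(i)$ return the same value (and likewise for $\ft$). With this lemma, proving equivalence of the unrolled pair on any input domain immediately yields equivalence of the original pair on that domain.

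For \textsc{path-step-equiv}, the argument reuses the soundness of (\ref{eq:rvtrule}) applied to $\fosu,\ftsu$. The key observation is that on the step-case domain no input reaches a base case on either side, so each leaf in the pruned call tree (in the sense of Fig.~\ref{fig:f1f2cgunrolled}) corresponds to an \emph{actual} recursive call. By the definition of sync-unrolling (Def.~\ref{def:Sync-Unrolling}), the multiset of leaf-argument tuples on the $\fo$-side equals that on the $\ft$-side, so every recursive call on one side has a related call (Def.~\ref{def:relatedcalls}) on the other. Hence replacing each leaf call with the same uninterpreted function $\mathit{UF}$ satisfies the hypothesis of rule (\ref{eq:rvtrule}) — related calls return the same nondeterministic value — and the conclusion of that rule is exactly the step-case equivalence we want.

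For \textsc{path-base-equiv}, the plan is to argue that the symbolic-execution test explores a finite set of execution trees that collectively cover the base-case domain, and checks equality of the returned expressions under the same symbolic input. Since the base case is defined as the set of inputs for which at least one of $\fosu,\ftsu$ reaches a base case at the top frame, the call tree on (at least) that side has bounded depth determined by $su$, and symbolic execution enumerates all finitely many paths through it. Whenever the corresponding call on the \emph{other} side also terminates via a base case, the test can compare return expressions directly; whenever it still contains a residual recursive call, the test must show that the returned expressions coincide with the base-side expression symbolically — and this is exactly what equating the two symbolic outputs under the path condition enforces. Soundness then reduces to soundness of symbolic execution: if every symbolic path proves equality under its path condition, then for every concrete input in the base-case domain, the concrete executions (which refine one of those symbolic paths) yield equal outputs.

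The main obstacle, and the reason this proof has to be deferred to after the detailed description in the next subsection, is the asymmetric case inside \textsc{path-base-equiv} where one side still has unresolved recursive calls while the other has reached a base case. The careful point is that the path-condition used by symbolic execution must be strong enough to rule out the divergent-on-one-side scenarios that previously broke (\ref{eq:rvtrule}); that is, we must verify that restricting the symbolic input to the base-case domain really does eliminate the problematic case in which a recursive call on one side has no matching call on the other. Once that restriction is formalized, combining the two premise soundness arguments with the semantics-preservation lemma yields Proposition~\ref{prop:premisesound}, and together with the domain partition (base vs.\ step) it plugs directly into the proof of Proposition~\ref{prop:pathsound}.
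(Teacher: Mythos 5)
Your route is essentially the paper's own: its proof of this proposition rests on (i) the soundness of the reduction of equivalence to verification via the template of Fig.~\ref{fig:rvtmainprogram}, (ii) the observation that unrolling $\fo,\ft$ into $\fosu,\ftsu$ does not change their semantics, and (iii) the soundness of symbolic execution for the base case and of rule \rule{part-eq} (proven sound in~\cite{GS06}) for the step case. Your semantics-preservation lemma and your two per-premise arguments are elaborations of exactly these three points, so the decomposition is the same; the paper is merely much terser. One stylistic caveat: your discussion of related calls and of the UF substitution ``satisfying the hypothesis'' of (\ref{eq:rvtrule}) is really an argument about why the step-case check is likely to \emph{succeed}; soundness needs only that the same congruent uninterpreted function over-approximates the recursive calls on both sides, which holds whether or not the calls end up related. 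That part is harmless but orthogonal to the claim being proved.

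There is, however, one concrete error in your base-case argument. You assert that on the base-case domain the call tree ``has bounded depth determined by $su$'' and that symbolic execution therefore ``enumerates all finitely many paths.'' This is false in general: the base-case domain only requires that \emph{at least one} frame of $\fosu$ or $\ftsu$ reach a base case, while the remaining recursive calls are kept (not replaced by UFs) and may recurse to an input-dependent, unbounded depth. The paper's Pascal's-triangle pair (Fig.~\ref{fig:pascal}) is precisely such a case --- one recursive call hits its base case immediately while a sibling call recurses until a lower bound on {\tt n} or {\tt m} is reached --- and it makes \tool{KLEE} time out; the paper explicitly concedes that invoking symbolic execution on a program with recursion makes the method incomplete. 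Your finiteness claim therefore should be dropped. Fortunately it only bears on termination and completeness, not on soundness: the statement you actually need --- if symbolic execution terminates having verified output equality on every explored path under its path condition, then the concrete executions agree on every input in the base-case domain --- survives without it, and with that correction your proof matches the paper's.
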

\begin{proof}	
Both rules are checked based on the reduction presented in Fig.~\ref{fig:rvtmainprogram}, of equivalence to verification, a reduction that was used in most past publications about program equivalence, e.g.,~\cite{GS06,GS07}, and its soundness is rather obvious. 
The fact that we check unrolled recursions cannot hinder soundness, because 
unrolling $\fo,\ft$ into $\fosu,\ftsu$ does not change the semantics of the functions. After unrolling, we use symbolic execution in the case of the base-case, which is a well-known sound technique, and the rule \rule{part-eq} (see (\ref{eq:rvtrule})), as described in the previous section, that was proven to be sound in~\cite{GS06}.
\qed
\end{proof}
%Some prior work such as \cite{DBLP:conf/fm/StrichmanV16} tried to cope with this challenge but their solution can prove only a limited subset of functions in comparison to our proposed method.

\subsection{(\premise{PATH-BASE-EQUIV}) - Proving equivalence in the base-case domain}\label{sec:SIMPLE-BASE-EQUIV}
Algorithm~\ref{alg:ExtendedBaseProof} describes our process for proving (\premise{PATH-BASE-EQUIV}) (a note on typography from hereon: when we refer to one of $\fo,\ft$ we write $f_i$ for $i \in\{1,2\}$. We will avoid repeating the range $i \in\{1,2\}$). 
Its parameters are the compared functions and the sync-unrolling $su$ information: a sequence of unrolling instructions corresponding to a pruned call graph as explained in the previous section. The algorithm begins by applying these unrolling instructions, after which the recursive calls at the leaves are in sync.

\noindent
\begin{algorithm}
\begin{minipage}{\linewidth}
\begin{algorithmic}[1]
\Function{ProvePathBaseEquiv}{ Programs $\fo,\ft$, sync-unrolling $su$}
    \For { $ i \in \{1,2\}$}\label{step:foreach_p}
	\State$\bar{f_i}$ = \alg{ApplyUnrolling}($f_i$,$su$)
	\State $bcpc_i$ = \alg{GetBaseCasePrecondition}$(\bar{f_i})$
	\EndFor
	\State $P$ = \alg{CreateVerificationProgram}$(\bar{\fo},\bar{\ft})$
    \State $P$ = \alg{AddAssumption}($P$, $bcpc_1 \lor bcpc_2$) \label{step:assumebcpc12}
    \State $P$ = \alg{AddEquivalenceAssertion}($P$)
    \State return \alg{SymbolicExecution}($P$) \label{step:symex}
	\EndFunction
\end{algorithmic}
\end{minipage}
\caption{Checking (\premise{Path-BASE-EQUIV}).}
\label{alg:ExtendedBaseProof}
\end{algorithm}

It then infers a weakest precondition predicate for the base-case domain of $\bar{f_i}$, denoted $bcpc_i$. Specifically \alg{GetBaseCasePrecondition}$(f)$ finds this predicate in two steps:
\begin{enumerate}
	\item In the first step it finds a condition on the input parameters, for the base-case to be taken in the \emph{top frame} of $f$. This is done by (1) replacing all recursive calls with calls to a function with a single assume(false) statement, and (2) invoking \emph{Concolic Execution}~\cite{10.1145/1321631.1321746} (specifically, PathCrawler~\cite{10.1007/11408901_21}), to generate a predicate over the input parameters that represents the feasible paths, namely the base-cases. As an example, in Fig.~\ref{fig:gbc-1} we present the generated program for $\fo$ from Fig~\ref{fig:f1f2}. Note that since this is a flat program (no loops and recursions) by construction, this step is guaranteed to terminate with a correct answer. Let $\rho_i$ denote this predicate for function $f_i$. For $\fo$ in Fig.~\ref{fig:gbc-1} we have $\rho_1 = n < 3$.
\item In the second step, it computes the weakest precondition predicates $bcpc_i$ for the base-case domain of the unrolled function. 
It declares a global variable $bc\_flag$ (initially set to 0) and use $\rho_i$ in order to set it to 1 if the base case is taken in one of the frames. It then adds an `assume($bc\_flag$)' statement after calling $f_i$, in order to restrict the executions to paths that return on at least one base case. The recursive calls are replaced with a call to a function $ret$ (short for `return') with a single return statement. Using Concolic execution again, it computes $bcpc_i$. 

An example of the second step, referring to the same $\fo$ of Fig~\ref{fig:f1f2}, is presented in Fig~\ref{fig:f1f2bc}. Note how $\rho_1 = n < 3$ is used to set the base-case flag in each of the frames (f1, f1\_). The base-case predicate $bcpc_1$ computed in this step is $n \leq 3$: indeed, in this range the unrolled program returns on at least one base case.  
\end{enumerate} 

\begin{figure}
	\begin{center}
	\begin{tabular}{ll}
\begin{minipage}{6.7 cm}
\begin{lstlisting}
int f1(int n) {
	if (n < 1) return 0;
	if (n <= 2) return 1;
	return block(n-1) +
	         block(n-2);
}
\end{lstlisting}
\end{minipage} &
\begin{minipage}{6.7 cm}
\begin{lstlisting}
int block(int n){            
    assume(false);
}			
\end{lstlisting}
\end{minipage}
\end{tabular}
\caption{The program generated in the first step of \alg{GetBaseCasePrecondition} and sent to \tool{PathCrawler} in order to compute $\rho_i$.}\label{fig:gbc-1}
\end{center}
\end{figure}

\begin{figure}
\begin{center}
	\begin{tabular}{ll}
\begin{minipage}{6.7 cm}
\begin{lstlisting}
bc_flag= 0;

int main(int n){  
   f1(n);
   assume(bc_flag);
}
 
int f1(int n){
   if (n < 3) bc_flag = 1;
   if (n < 1) return 0;
   if (n <= 2) return 1;
   return f1_(n-1) + ret(n-2);
}
\end{lstlisting}
\end{minipage} &
\begin{minipage}{6.7 cm}
\begin{lstlisting}
int f1_(int n){
   if (n < 3) bc_flag = 1;
   if (n < 1) return 0;
   if (n <= 2) return 1;
   return ret(n-1) + ret(n-2);
}

int ret(int n){
    return 0;
}
\end{lstlisting}
\end{minipage}
\end{tabular}
\caption{The program generated in the second step of \alg{GetBaseCasePrecondition} in order to compute the base-case precondition $bcpc$.}
\label{fig:f1f2bc}
\end{center}
\end{figure}

\tool{RVT} calls \alg{CreateVerificationProgram} to generate a verification condition following the pattern in Fig.~\ref{fig:rvtmainprogram} while restricting the inputs, in Line \ref{step:assumebcpc12}, by adding $assume(bcpc_1 \lor bcpc_2)$ at the beginning of $P$. 
Finally, it calls a symbolic execution engine to prove the equivalence. By the very fact that we invoke it on a program with recursion, this makes our method incomplete. While one would expect that when we restrict the executions only to inputs that drive the functions to their base-cases, symbolic execution will be successful in a relatively short amount of time, \Long{the following example shows that this is not always the case.}
\Short{in the long version of this article~\cite{SS22} we show an example that demonstrates why this is not guaranteed.}

%\iffalse %comment out for Long 

\begin{figure}[t]
	\begin{center}
		\begin{tabular}{ll}
			\begin{minipage}[t][5cm]{5.5 cm}
\begin{lstlisting}
int p1(int n, int m){
if (m<1||n<1||m > n)
  return 0;
if (m==1||n==1||m == n)
  return 1;
return p1(n - 1, m - 1) 
  + p1(n -1, m);
}
\end{lstlisting}
			\end{minipage} & 
			\begin{minipage}[t][5cm]{7 cm}
\begin{lstlisting}
int p2(int n, int m){
if (m<1||n<1||m > n)
  return 0;
if (m==1||n==1||m == n)
  return 1;
return p2(n - 1, m - 1)  
  + p2(n - 2, m - 1)  
  + p2(n - 2, m);
}
\end{lstlisting}
			\end{minipage}
		\end{tabular}
		\caption{Equivalent implementations of the Pascal's triangle.}
		\label{fig:pascal}
	\end{center}
\end{figure}

\begin{example}
Consider the pair of functions in Fig.~\ref{fig:pascal}. In this example, the base case contains a value of m that is dependent on n. Specifically, if  m=n-1, the call to p1(n-1,m) terminates in its base case since the condition m==n is invoked. On the other hand the other recursive call, {\tt p1(n - 1, m - 1)}, is repeatedly called until one of the lower bounds of {\tt n} or {\tt m} are reached, a process that is likely to fail symbolic execution for large values of {\tt n} and {\tt m}. Indeed \tool{KLEE} times out when trying to prove the equivalence of this base-case.	
\end{example}

%\fi  %comment out for Long 

A schematic depiction of $P$ is shown in Fig.~\ref{fig:basegapvefprogram}(a). If the verification of $P$ in line \ref{step:symex} is successful, then \tool{RVT} deems the base-case equivalence of the pair as correct. 
\begin{figure} [t]
\begin{center}
\begin{tabular}{cc}
\begin{minipage}{6.5 cm}
\begin{lstlisting}[escapeinside={(*}{*)}]
i = non_det()
assume((*$bcpc_1 \lor bcpc_2$*))
res1 = (*$\bar{\fo}$*)(i)
res2 = (*$\bar{\ft}$*)(i)
assert(res1==res2)
\end{lstlisting}
\end{minipage} &
\begin{minipage}{7 cm}
\begin{lstlisting}[escapeinside={(*}{*)}]
i = non_det()
assume(!((*$bcpc_1 \lor bcpc_2$*)))
res1 = (*$\hat{\fo}$*)(i)
res2 = (*$\hat{\ft}$*)(i)
assert(res1==res2)
\end{lstlisting}
\end{minipage} \\
(a) & (b)
\end{tabular}
\caption{A verification program to prove conditional equivalence for (a) the base cases, and (b) the step case. The difference between ${\bar{f}}$ and ${\hat{f}}$ is that in the former recursive calls are maintained, whereas in the latter they are replaced with UFs.}
\label{fig:basegapvefprogram}
\end{center}
\end{figure}

\Long{
\subsubsection{Comparing methods to prove the base-cases equivalence}
We choose to use Symbolic Execution to prove the base cases equivalence due to complexity. We will compare here the complexity of two alternatives. The first is using \tool{CBMC} and increasing the unrolling factor until the verification is successful or timeout is reached. The second is using Symbolic Execution. 
\tool{CBMC} invokes the SAT solver once after traversing the whole call graph, which is a tree as we unroll the recursions, while symbolic execution runs in a DFS style on the call graph and calls the SAT solver every time it reaches a leaf.

Assume that we try to prove the said equivalence of two programs $P_1$ and $P_2$ and that at least one of them contains $C$, $C>1$, recursive calls in its body that are executed on the same trace. Without loss of generality, assume that the equivalence is proved when $P_1$ is unrolled $k$ times, for some $k \geq 1$. 
If we were to use CBMC, the unrolling would create a program that is translated to a formula that represents  $C^k$ calls. The call graphs in Fig.~\ref{fig:f1f2cgunrolled} illustrate this point: the number of nodes in such a tree is $O(C^k)$. The encoding to a propositional formula creates a different set of variables for every call, and thus the number of variables in this formula is proportional to $C^k$. \tool{CBMC} calls a SAT solver to solve this formula, which in itself is exponential in the number of variables. Hence the complexity for the verification task is exponential in $C^k$, thus double exponential in $k$.

On the other hand, using Symbolic Execution, the size of the formula is proportional to $k$, as it represents only a single trace. Symbolic Execution goes through all paths up to a given bound, and the number of paths is worst-case exponential (this is the known ``path explosion" problem  \cite{10.1007/978-3-540-78800-3_28}). Each of these trace verification tasks is a formula handed to a SAT solver and therefore has an exponential run-time bound. That means that the overall complexity of this method is exponential squared in the program's size. 
Hence using symbolic execution for this task has better worst-case complexity.
}

\subsection{(\premise{PATH-STEP-EQUIV}) - Proving equivalence in the step-case domain}
\label{sec:SIMPLE-STEP-EQUIV}
\noindent
\begin{algorithm}
\begin{minipage}{\linewidth}
\begin{algorithmic}[1]
\Function{ProvePathStepEquiv}{ Programs $\fo,\ft$, sync-unrolling $su$, precondition $ebcp$}
	\State$\hat{\fo}$ = \alg{ApplyUnrolling}($\fo$,$su$)
	\State$\hat{\ft}$ = \alg{ApplyUnrolling}($\ft$,$su$)
	\State $P$ = \alg{CreateVerificationProgram}$(\hat{\fo},\hat{\ft})$
	\State $P$ = \alg{ReplaceRecursionsWithUFs}$(P)$
	\State $P$ = \alg{AddAssumption}($P,!ebcp$)
    \State $P$ = \alg{AddEquivalenceAssertion}($P$)
    \State return \alg{CBMC}($P$)
	\EndFunction
\end{algorithmic}
\end{minipage}
\caption{A sound algorithm to prove equivalence of programs for their base cases.}
\label{alg:StepCaseProof}
\end{algorithm}
The step-case proof covers the inputs that are not contained in the base-case domain. Algorithm \ref{alg:StepCaseProof} describes \tool{RVT}'s process for this proof.
It is different than \alg{ProvePathBaseEquiv} (Fig.~\ref{alg:ExtendedBaseProof}) in that it receives as input the base-case precondition that is generated in algorithm \ref{alg:ExtendedBaseProof} and is aliased here as $ebcp$. Based on the definition of the step-case domain (see Sec.~\ref{sec:adaptstep}), it restricts the input space to its negation. Another difference is that \alg{ReplaceRecursionsWithUFs}  replaces all the recursive calls (those that are left after applying $su$) with UFs. 

Now $\hat{\fo}$ and $\hat{\ft}$ are loop-and recursion-free, and can be verified by any software (bounded or unbounded) model-checker. In our case we chose to work with CBMC, which among other things has native support of uninterpreted functions, a useful feature in our case. $P$ is presented in Fig.~\ref{fig:basegapvefprogram} (b). If \tool{CBMC} validates $P$ then the pair of input functions is declared by \tool{RVT} to be step-case equivalent. 

\section{Automatic synchronization of recursive functions}
\label{sec:synch}
The motivation for automating the process of finding sync-unrolling is clear. Both in~\cite{DBLP:conf/fm/StrichmanV16} and~\cite{DBLP:conf/kbse/FelsingGKRU14}, who considered the problem of unbalanced recursive functions, supplied this data manually.   

We already saw in Example~\ref{ex:unroll} a pair of functions for which there exist sync-unrolling (see Fig.~\ref{fig:f1f2cgunrolled}). The unrolled code that \tool{RVT} generates for that example appears in Fig.~\ref{fig:f1f2unrolled}. 

\begin{figure}[t]
\begin{center}
	\begin{tabular}{ll}
\begin{minipage}{7 cm}
\begin{lstlisting}
int f1_(int n){
   if (n < 1) return 0;
   if (n <= 2) return 1; 
   return f1(n-1) + f1(n-2);
 }
int f1(int n){
   if (n < 1) return 0;
   if (n <= 2) return 1; 
   return f1_(n-1) + f1(n-2);
 }
\end{lstlisting}
\end{minipage} &
\begin{minipage}{7 cm}
\begin{lstlisting}
	
int f2(int n){
   if (n < 1) return 0;
   if (n <= 2) return 1;
   return f2(n-2) + f2(n-2) 
                  + f2(n-3);
  }
\end{lstlisting}
\end{minipage}
\end{tabular}
\caption{Unrolling once the call to $\fo(n-1)$, to achieve synchronization.}
\label{fig:f1f2unrolled}
\end{center}
\end{figure}

Our method for finding sync-unrolling, when possible, is described in algorithm \ref{alg:Findunrolling}. 
It creates a C program with an assertion, such that a counterexample represents an unrolling number per function call site, and those numbers define a sync-unrolling. We exploit \tool{CBMC}'s {\tt non\_det()} instruction, corresponding to a non-deterministic choice, to let the solver decide whether to unroll or not in each call site. The program records the call parameters in cases in which it decides not to unroll --- these will be the leaves of the call graph, where we will call the uninterpreted functions. At the end of the generated program we assert that the set of  recorded parameters of the two programs is different. Hence, from a counterexample that \tool{CBMC} finds, we can extract the sync-unrolling, together with a particular input to the functions. 
A single input is enough since we consider a single path (that has recursive calls) at a time.

The main functions of the algorithm are:
\begin{itemize}
    \item \alg{AddDepthAndCallsTracking}$(f)$ adds the auxiliary infrastructure needed for recording the unrolling choices at each recursive call site. Specifically at each call site that it chooses to stop unrolling (i.e., the leaves), it records the depth {\tt D} and the index of the call site {\tt Site} (i.e., each recursive call site is given a unique index). This data is later extracted from the counterexample in order to produce the sync-unrolling. 
    
    \item \alg{GetNaturalBaseCasePrecondition} is called. It produces $\rho_i$ as described in the first step of \alg{GetBaseCasePrecondition} in section \ref{sec:SIMPLE-BASE-EQUIV}.
    
    \item \alg{AddAssumption}$(f,p)$ assumes the precondition $p$ at the beginning of $f$. 
    
    \item \alg{ApplyNonDeterministicRecording}$(f)$ adds a non-deterministic condition. A `true' choice corresponds to no-more unrolling (i.e,. the leaves of the pruned call graph). In that case the input of the current iteration is recorded and the function returns. 
    
    \item \alg{CreateSyncUnrollingProgram}$(\fo,\ft)$ generates a new program based on the template of~\ref{fig:rvtmainprogram}. After the calls to $\fo$ and $\ft$, it asserts that the sets of recorded inputs from \alg{ApplyNonDeterministicRecording} that contain more than one recording (i.e., went through at least one recursive call) are different.

\item \alg{GenerateSyncUnrolling}$(\fo,\ft,ce)$ extracts the values of \texttt{D} and {\tt Site} from the counterexample in order to  generate the sync-unrolling. 
\end{itemize}  

\noindent
\begin{algorithm}
\begin{minipage}{\linewidth}
\begin{algorithmic}[1]
\Function{FindSyncUnrolling}{Loops Free Programs $\fo,\ft$}
    \For { $ i \in \{1,2\}$} \label{step:setupforsu}
    \State $f_i$ = \alg{AddDepthAndCallsTracking}$(f_i)$\label{step:depth_tracking}
    \State $\rho_i$ = \alg{GetNaturalBaseCasePrecondition}$(f_i)$\label{step:get_bcpc}
    \State $f_i$ = \alg{AddAssumption}$(f_i,!\rho_i)$\label{step:block_bc}
    \State $f_i$ = \alg {ApplyNonDeterministicRecording}$(f_i)$
    \EndFor
    \State $P$ = \alg{CreateSyncUnrollingProgram}$(\fo,\ft)$\label{step:create}
    \For{unwinding factor $uw$ increasing from 1 up until a predefined timeout}
        \State $P_{unwinded}$ = \alg{Unwind}$(P,uw)$
        \If {CBMC$(P_{unwinded})$ results with a counterexample $ce$}
        \State \Return \alg{GenerateSyncUnrolling}$(\fo,\ft,ce)$
        \EndIf
    \EndFor
\EndFunction
\end{algorithmic}
\end{minipage}
\caption{An algorithm for finding a synchronization unrolling, if it exists.}
\label{alg:Findunrolling}
\end{algorithm}

\begin{example}
Reconsider the functions in Fig.~\ref{fig:f1f2}. 
The program $P$ that is generated in line \ref{step:create} is shown in Fig.~\ref{fig:f1f2susetup}.

\begin{figure}[t]
\begin{center}
	\begin{tabular}{ll}
		\begin{minipage}{6.5 cm}
\begin{lstlisting}
int f1(int n, int D, int Site) {
    assume(n > 2);
    if(non_det()){
        Leaves1[Size1++] = n;
        return -1;
    }
    if (n < 1) return 0;
    if (n <= 2) return 1; 
    return f1(n-1,D+1,0) + 
           f1(n-2,D+1,1);
}
\end{lstlisting}
\end{minipage} &
\begin{minipage}{6.5 cm}
\begin{lstlisting}
int f2(int n, int D, int Site) {
    assume(n > 2);
    if(non_det()){
        Leaves2[Size2++] = n;
        return -1;
    }
    if (n < 1) return 0;
    if (n <= 2) return 1;
    return f2(n-2,D+1,0) + 
    f2(n-2,D+1,1) + f2(n-3,D+1,2) ;
}
\end{lstlisting}
\end{minipage} 
\end{tabular}
\begin{minipage}{7 cm}
\begin{lstlisting}
main() {
  i = non_det();
  Leaves1 = Leaves2 = {};
  Size1 = Size2 = 0;
  f1(i,0,-1);
  f2(i,0,-1);
  assume(Size1 > 1 && Size2 > 1);
  assert(Leaves1 != Leaves2); // set comparison
} 
\end{lstlisting}
\end{minipage}

\caption{The program generated in line~\ref{step:create} of Algorithm~\ref{alg:Findunrolling}. The capitalized variables are not part of the original functions $\fo,\ft$. {\tt Leaves} and {\tt Size} are global. 
}
\label{fig:f1f2susetup}
\end{center}
\end{figure}
 
After unwinding $P$ enough times \tool{CBMC} will find a counterexample for any $n$ bigger than 2, from which \alg{GenerateSyncUnrolling} derives the sync-unrolling shown in Fig.~\ref{fig:f1f2cgunrolled}.
\qed
\end{example}

\Short{In the long version of this article~\cite{SS22} we also show  an example of equivalent recursive functions that do not have sync-unrolling at all, and an example where it exists, but our suggested algorithm diverges. These are rather contrived examples, however, and in practice we expect that our method will succeed in finding sync-unrolling. } 

%\iffalse %comment out for Long 
\begin{figure}[t]
	\begin{center}
		\begin{tabular}[t]{ll}
\begin{minipage}[t]{6.5 cm}
\begin{lstlisting}
int m1(int n, int flag) {
if (n < 1) return 0;
if (n == 1) return 1;
return m1(n - 1, !flag) 
	+ m1(n - 2, !flag);
}
\end{lstlisting}
			\end{minipage} 
			&
\begin{minipage}[t]{6 cm}
\begin{lstlisting}
int m2(int n, int mode) {
if (n < 1) return 0;
if (n == 1 || n == 2) 
	return 1;
int results = 0;
if (mode)
  results = m2(n - 2,!mode) 
	+ m2(n - 2,!mode) 
	+ m2(n - 3, !mode);
if (!mode)
  results = m2(n - 1, !mode) 
	+ m2(n - 2, !mode);
return results;
}
\end{lstlisting}
			\end{minipage}
		\end{tabular}
		\caption{Equivalent implementations of Fibonacci. m2 alternates between two paths in the control flow.}
		\label{fig:f1f2switch}
	\end{center}
\end{figure}

\begin{example}
%	[*** TODO: can we prove that there isn't a sync-unrolling ? ****]
Consider the pair of functions in Fig.~\ref{fig:f1f2switch}. As we have seen, the steps are equivalent for $n > 3$, {m2}'s base case handles $n=2$ and thus this pair is equivalent. For the case where {\tt mode} is true, \tool{RVT} generates a sync-unrolling that unrolls the call {\tt m1(n-1,!flag)} once and blocks everything else on both sides. When trying to prove the related step case, \tool{CBMC} fails. This happens because in the unrolled frame the polarity of {\tt flag} is negated to the polarity of {\tt mode} in the related calls due to the odd number of unrolling, and as a consequence the generated UFs are unrelated and fails the proof, even though {\tt flag} does not affect {\tt m1}'s result. 
\end{example}

\begin{example} 
In this example we will show a case in which the method we described fails. 
Consider the pair of functions in Fig.~\ref{fig:redundentcalls}. In this example, the values in $\ttwo$ are calculated in advance, and are used differently according to the parity of $n$. The problem occurs when \tool{RVT} tries to generate a sync unrolling while one of the recursive calls is redundant (i.e., its result is not used). Algorithm \ref{alg:Findunrolling} aims to find a value for which all the recursive calls have related calls. To achieve that, each recursive call on each side is either recorded, unrolled or is not reached. The redundant calls in $\ttwo$ are not on a branched path and therefore must be executed; it cannot be recorded as there is no equivalent recording for them in $\tone$ and unrolling them will yield more redundant calls. Whenever \tool{RVT} fails to generate a sync unrolling, it does not unroll at all, and this  fails the proof with this pair.

\begin{figure}
\begin{center}
\begin{tabular}{ll}
\begin{minipage}[t][5cm]{5.5 cm}
\begin{lstlisting}
int t1(int n) {
  if (n < 1) return 0;
  if (n == 1) return 1;
  return t1(n - 1) 
    + t1(n - 2);
}
\end{lstlisting}
\end{minipage} & 
\begin{minipage}[t][5cm]{7 cm}
\begin{lstlisting}
int t2(int n) {
  if (n < 1) return 0;
  if (n == 1 || n == 2) return 1; 
  int results = 0;
  int r1 = t2(n-1);
  int r2 = t2(n-2);
  int r3 = t2(n-3);
  if (n % 2 == 0)
    results = r2 + r2 + r3;
  else results = r1 + r2;
  return results;
}
\end{lstlisting}
				\end{minipage}
			\end{tabular}
			\caption{Equivalent implementations of Fibonacci. $\ttwo$ chooses between two paths depending on the input, but nevertheless makes the same recursive calls.}
			\label{fig:redundentcalls}
		\end{center}
	\end{figure}
	
\end{example}

%\fi  %comment out for Long 

\section{Equivalence of general unbalanced recursions\Long{: the rule \rule{FULL-PART-EQ} }}
\label{sec:multistep}
We now generalize the solution of Sec.~\ref{sec:SIMPLE-BASE-EQUIV} to general unbalanced recursive functions, i.e., with multiple paths in which a recursive call is made. 
As explained right after Def.~\ref{def:Sync-Unrolling}, the correct sync-unrolling may depend on the input, if it leads to different paths. The recursions in Fig.~\ref{fig:f1f2cond} demonstrate this fact.

\begin{figure}
\begin{center}
	\begin{tabular}{ll}
\begin{minipage}[t][3.5cm]{6 cm}
\begin{lstlisting}
int h1(int n){
   if (n < 1) return 0;
   if (n == 1) return 1; 
   return h1(n-1) + h1(n-2);
}
\end{lstlisting}
\end{minipage} &
\begin{minipage}[t][3.5cm]{6 cm}
\begin{lstlisting}
int h2(int n){
   if (n < 1) return 0;
   if (n == 1) return 1; 
   if (n & 1 == 0)
     return h2(n-1) + h2(n-2);
   if (n & 1 == 1)
     return h2(n-2) + h2(n-2) 
                    + h2(n-3);
}
\end{lstlisting}
\end{minipage}
\end{tabular}
\caption{Equivalent implementations of Fibonacci, where h2 has a conditioned step. The condition $n \& 1$ checks the parity of $n$.}
\label{fig:f1f2cond}
\end{center}
\end{figure}

Indeed, for e.g., $n=2$ and $n=3$ the sync-unrolling is clearly different owing to the condition that checks the parity of $n$. Hence, we need to decompose the proof, by partitioning the domain such that each partition induces a single path in the program. 
%A better solution would be to take into account only paths that impact the output of the function using data-dependent analysis (as done in IMP-S \cite{inproceedings}), but to the best of our knowledge there is no industrialized tool that offer this capability.
For each such partition, we will use the method suggested in the previous two sections. Formally, let $CF(f)$ be the set of all the paths in the control flow of the top frame of a function $f$. Let $in_B(f)$ be the set of inputs that drive $f$ to its base case, i.e., without a recursive (or mutual recursive) calls. Let $in(p)$ denote all the inputs that their traces go through the path $p$, $f|_a$ denote the function $f$ with an assumption $a$ restricting its inputs,  and $part(\fo,\ft,p)=in_B(\fo) \cup in_B(\ft) \cup in(p)$. `$su$', as in  Sec.~\ref{sec:adaptstep}, is the sync-unrolling information. This time, those instructions are adjusted to the new restricted functions $f_i|_{part(\fo,\ft,p_i)}$.
Based on this notation, we define the proof rule \rule{FULL-PART-EQ}:

\begin{equation}
{\frac{
  {\displaystyle {\bigwedge_{
  			\begin{array}{l}
  			p_1\in CF(\fo) \\
  			p_2\in CF(\ft)
  	\end{array}	
  	}
%  \bigwedge_{p_2\in CF(\ft)}
 }}\splitdfrac{\text{\premise{PATH-BASE-EQUIV}}(\fo|_{part(\fo,\ft,p_1)}^{su},\ft|_{part(\fo,\ft,p_2)}^{su})\ \wedge}{\text{\premise{PATH-STEP-EQUIV}}(\fo|_{part(\fo,\ft,p_1)}^{su},\ft|_{part(\fo,\ft,p_2)}^{su})}}{\text{\premise{PARTIAL-EQUIV}}(\fo,\ft)}}\text{\rule{FULL-PART-EQ}}
\end{equation}
% todo: is there a soundness theorem here ? 
Note that the programs provided to \rule{PATH-BASE-EQUIV} and  \rule{PATH-STEP-EQUIV} are the same, and hence we can prove these premises as described in the previous sections without a change. 
\begin{proposition}
	\rule{FULL-PART-EQ} is sound.
\end{proposition}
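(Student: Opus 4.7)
The plan is to reduce the soundness of \rule{FULL-PART-EQ} to that of \rule{PATH-PART-EQ} (Proposition~\ref{prop:pathsound}), by showing that for any input on which both $\fo$ and $\ft$ terminate, one of the conjuncts in the premise already establishes equivalence at that input.

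First I would fix an arbitrary input $x$ on which both $\fo$ and $\ft$ terminate; partial equivalence only constrains such inputs. Since $CF(f)$ partitions the top-frame traces of $f$ according to their branching decisions, there is a unique $p_1^\ast \in CF(\fo)$ with $x \in in(p_1^\ast)$, and a unique $p_2^\ast \in CF(\ft)$ with $x \in in(p_2^\ast)$. By the definition of $part$, we have $in(p_1^\ast) \subseteq part(\fo,\ft,p_1^\ast)$ and $in(p_2^\ast) \subseteq part(\fo,\ft,p_2^\ast)$, so $x$ lies inside the restricted input domain on both sides of the pair $(p_1^\ast, p_2^\ast)$.

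Next I would invoke \rule{PATH-PART-EQ} on this particular pair. By the hypothesis of \rule{FULL-PART-EQ}, both \premise{PATH-BASE-EQUIV} and \premise{PATH-STEP-EQUIV} hold for the restricted pair $(\fo|_{part(\fo,\ft,p_1^\ast)}^{su}, \ft|_{part(\fo,\ft,p_2^\ast)}^{su})$, and by Proposition~\ref{prop:pathsound} these premises jointly entail partial equivalence of this restricted pair. Because the restrictions are input-domain assumptions that do not alter the function bodies, specializing the conclusion to $x$ yields $\fo(x) = \ft(x)$. Since $x$ was arbitrary, \premise{PARTIAL-EQUIV}$(\fo,\ft)$ follows.

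The main obstacle, and the reason the rule is worth stating separately from \rule{PATH-PART-EQ}, is that the sync-unrolling $su$ may legitimately differ across path pairs. This is fine for soundness because each conjunct in the premise is verified under its own $su$ and applied only on the inputs that pair covers; unrolling itself is a semantics-preserving transformation (as already exploited in Proposition~\ref{prop:premisesound}), so each local application of \rule{PATH-PART-EQ} is valid. The one subtle bookkeeping step is to verify that $CF(\fo) \times CF(\ft)$ is a finite and exhaustive enumeration of top-frame path pairs. This is immediate: each path is defined by a conjunction of branch conditions in a single execution of the top frame, and every terminating execution traverses exactly one such path, so the conjunction in the premise is indexed over a finite set covering every pair of paths that some input can induce.
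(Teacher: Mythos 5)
Your proof is correct and follows essentially the same route as the paper's: the paper likewise argues that the conjunction ranges over all pairs of paths, so every input is covered by some conjunct, and each conjunct is exactly the premise of \rule{PATH-PART-EQ}, whose soundness is Proposition~\ref{prop:pathsound}. Your version merely spells out the details (uniqueness of the induced path pair, $in(p)\subseteq part(\fo,\ft,p)$, and the harmlessness of per-pair sync-unrollings) that the paper leaves implicit.
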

\begin{proof}
\rule{FULL-PART-EQ} covers all the possible pairs of paths. For each such pair it checks the same premise as in \rule{PATH-PART-EQ},
the soundness of which was proved in Proposition~\ref{prop:pathsound}. 
\qed 
\end{proof}

\begin{algorithm}
\begin{minipage}{\linewidth}
\begin{algorithmic}[1]
\Function{ProvePartEq}{ Functions $\fo,\ft$}
    \State $AreEquivalent$ = TRUE
    \For { $ i \in \{1,2\}$}
	\State \label{step:getallpaths} $PathsPredicates_i$ = \alg{GetAllPaths}$(f_i)$
	\State \label{step:GetBaseCasePrecondition} $pp_{f_{base}^i}$ = \alg{GetNaturalBaseCasePrecondition}($f_i$) 
	\EndFor
	\For{every pair ($pp_1,pp_2$) in $PathsPredicates_1 \times PathsPredicates_2$}
	\If{ $ pp_1 \cap pp_2 = \emptyset$} \label{step:skip_unfeasible}
	continue
	\EndIf
	\State \label{step:addassumption} $\fo$ = \alg{AddAssumption}($\fo$, $pp_1\lor pp_{f_{base}^1}$) 
	\label{step:assmp1}
	\State $\ft$ = \alg{AddAssumption}($\ft$, $pp_2 \lor pp_{f_{base}^2}$) 
	\label{step:assmp2}
	\State su = \alg{FindSyncUnrolling}($\fo$,$\ft$)
	\State $BaseCaseEquiv$ = \alg{ProvePathBaseEquiv}($\fo,\ft,su$) 
	\State \label{step:provestepincs} $StepCaseEquiv$ = 	\alg{ProvePathStepEquiv}($\fo,\ft,su,bcpc_1\lor bcpc_2$)
 \If{ \label{step:false}
	!( BaseCaseEquiv $\wedge$ StepCaseEquiv)}
	\Return FALSE; 
	\EndIf
	\EndFor
    \State \Return TRUE;
	\EndFunction
\end{algorithmic}
\end{minipage}
\caption{Proving equivalence of general (i.e., multi-path) unbalanced recursions.}
\label{alg:provemcr}
\end{algorithm}

We check the premise of \rule{FULL-PART-EQ} with Algorithm~\ref{alg:provemcr}.  It partitions the input domain of the two input functions, such that each partition corresponds to a separate path in their flattened version, as explained above. It then tries to prove the equivalence of each pair of such partitions. We now describe the main elements of the algorithm: 
\begin{itemize}
    \item In line~\ref{step:getallpaths},
    \alg{GetAllPaths($f$)} uses \emph{concolic execution} to get all the paths in the top frame of $f$. To that end, \alg{GetAllPaths} flattens $f$ by replacing recursive calls with calls to $ret$, which is a function containing a single $return$ statement and nothing else as described in Fig.~\ref{fig:f1f2bc}, and calls PathCrawler \cite{10.1007/11408901_21}. PathCrawler generates a list of path conditions for all the possible paths in that flattened version of $f$.  
    \item In line~\ref{step:GetBaseCasePrecondition} \alg{GetNaturalBaseCasePrecondition} is called (see Sec.~\ref{sec:synch}). 
    \item In line \ref{step:skip_unfeasible} we screen out infeasible paths, as a performance optimization. It is based on the observation that some of the paths predicates combinations in $PathsPredicates_1 \times PathsPredicates_2$ are not feasible. \Short{More details appear in the long version of this article~\cite{SS22}.}
    \Long{\tool{RVT} checks this by creating a verification task as presented in Fig.~\ref{fig:checkfeasibility}. In the generated program, we assume both path predicates on the same non deterministic inputs. Sending this task to a model checker results in two option: $1.$ If the verification is successful, that means the conjunctions of the assumptions is unsatisfiable and thus the intersection of $pp_1$ and $pp_2$ is empty. $2.$ If the verification fails, that means that the paths in question are feasible as the verification reaches the false assertion.}
    
    \item As of line~\ref{step:addassumption}, for each feasible pair of paths predicates, \tool{RVT} treats the programs as though they had a single step by restricting the inputs and tries to prove them with \alg{ProvePathBaseEquiv} and \alg{ProvePathStepEquiv} from sections \ref{sec:SIMPLE-BASE-EQUIV} and \ref{sec:SIMPLE-STEP-EQUIV}, respectively. \alg{AddAssumption} in those algorithms deviates from how is was originally described. Rather than just adding the assumptions, it replaces the assumptions added in algorithm \ref{alg:provemcr} in lines~\ref{step:assmp1} and~\ref{step:assmp2}. Otherwise, when blocking the base cases for example, we will assume contradicting assumptions and the verification task would be vacuously true. The formula $bcpc_1\lor bcpc_2$ in line \ref{step:provestepincs} is computed as part of  \alg{ProvePathBaseEquiv} in the line above. 
    
    \item In line~\ref{step:false}, we return False (i.e., we failed proving that $\fo$ and $\ft$ are equivalent), if we failed proving equivalence for one of the pairs. 
\end{itemize}

%\iffalse %comment out for Long 
\begin{figure} [t]
\begin{center}
\begin{minipage}{7 cm}
\begin{lstlisting}[escapeinside={(*}{*)}]
i = non_det()
assume(*($pp_1(i)$*) && (*($pp_2(i)$*))
assert(false)
\end{lstlisting}
\end{minipage}
\caption{A verification task for checking   the emptiness of an intersection of two input domains, which are defined by the path predicates $pp_1$ and $pp_2$ over an input argument $i$.}
\label{fig:checkfeasibility}
\end{center}
\end{figure}
%\fi % comment out for Long

\section{Summary}
\rule{FULL-PART-EQ} can prove a range of use cases beyond what was demonstrated here --- see RVT's website: https://rvt.iem.technion.ac.il for more examples. To the best of our knowledge, based on experiments, \tool{RVT} is the only tool that can prove the equivalence of programs with unsynchronized multiple recursive calls such as the pairs in Figs.~\ref{fig:f1f2} and \ref{fig:f1f2cond}. Symbolic execution based tools such as \tool{CLEVER}~\cite{9285657} and \tool{ARDiff}~\cite{10.1145/3368089.3409757} cannot handle unbounded loops and recursions. \tool{REVE}~\cite{FelsingGrebingKlebanov2014} fails to prove those examples. We also created a simple verification task based on the reduction in Fig.~\ref{fig:rvtmainprogram}, and experimented with the state-of-the-art software model checkers that performed best in the 2022 software verification competition SV-COMP 2022 \cite{10.1007/978-3-030-99527-0_20}: \tool{CPAchecker}~\cite{10.1007/978-3-642-22110-1_16} and \tool{Ultimate Automizer}~\cite{10.1007/978-3-319-89963-3_30} failed to prove these examples.

\bibliographystyle{plain}  
\bibliography{references}

\end{document}